\newtheorem{thm}{Theorem}[section]
\newtheorem{prop}[thm]{Proposition}
\theoremstyle{definition}
\newtheorem{defn}[thm]{Definition}
\theoremstyle{remark}
\newtheorem{rem}[thm]{Remark}
\newtheorem{exa}[thm]{Example}
\numberwithin{equation}{section}
\newcommand{\set}[1]{\left\{#1\right\}}
\newcommand{\Real}{\mathbb R}
\newcommand{\Natural}{\mathbb N}
\newcommand{\such}{{\ | \ }}
\newcommand{\limT}{\lim_{T \to \infty}}
\newcommand{\limsupT}{\limsup_{T \to \infty}}
\newcommand{\dfn}{\, := \,}
\newcommand{\prob}{\mathbb{P}}
\newcommand{\plim}{\prob \text{-} \lim}
\newcommand{\plimsup}{\prob \text{-} \limsup}
\newcommand{\pliminf}{\prob \text{-} \liminf}
\newcommand{\plimT}{\plim_{T \to \infty}}
\newcommand{\plimsupT}{\plimsup_{T \to \infty}}
\newcommand{\pliminfT}{\pliminf_{T \to \infty}}
\newcommand{\qprob}{\mathbb{Q}}
\newcommand{\expec}{\mathbb{E}}
\newcommand{\basis}{(\Omega, \mathcal{F}, \filtration, \prob)}
\newcommand{\filtration}{\pare{\mathcal{F}_t}_{t \in \Real_+}}
\newcommand{\F}{\mathcal{F}}
\newcommand{\G}{\mathcal{G}}
\newcommand{\Hcal}{\mathcal{H}}
\newcommand{\cadlag}{c\`adl\`ag}
\newcommand{\ud}{\, \mathrm d}
\newcommand{\bbe}{\Phi^\downarrow}
\newcommand{\bab}{\Phi^\uparrow}
\newcommand{\bou}{\Phi}
\newcommand{\subseteqp}{\subseteq_\prob}
\newcommand{\eqp}{=_\prob}
\newcommand{\pare}[1]{\left(#1\right)}
\newcommand{\bra}[1]{\left[#1\right]}
\newcommand{\dbra}[1]{[\kern-0.15em[ #1 ]\kern-0.15em]}
\newcommand{\dbraco}[1]{[\kern-0.15em[ #1 [\kern-0.15em[}
\newcommand{\Op}{O_\prob}
\newcommand{\Opp}{O^\uparrow_\prob}
\newcommand{\Opn}{O^\downarrow_\prob}
\newcommand{\indic}{\mathbb{I}}
\newcommand{\absco}{{<\kern-0.53em<}}
\begin{document}

\title[On the Dybvig-Ingersoll-Ross Theorem]{On the Dybvig-Ingersoll-Ross Theorem}%
\author{Constantinos Kardaras}%
\address{Constantinos Kardaras, Mathematics and Statistics Department, Boston University, 111 Cummington Street, Boston, MA 02215, USA.}%
\email{kardaras@bu.edu}%
\author{Eckhard Platen}%
\address{Eckhard Platen, School of Finance and Economics \& Department of Mathematical Sciences, University of Technology, Sydney, P.O. Box 123, Broadway, NSW 2007, Australia.}%
\email{eckhard.platen@uts.edu.au}%

\thanks{The first author would like to acknowledge the generous support of the \emph{Bruti-Liberati Visiting Fellowship} that enabled his visit at the \emph{School of Finance and Economics} of the \emph{University of Technology, Sydney}, where this work was carried out.}%

\date{\today}%
\begin{abstract}
The Dybvig-Ingersoll-Ross (DIR) theorem states that, in arbitrage-free term structure models, long-term yields and forward rates can never fall. We present a refined version of the DIR theorem, where we identify the reciprocal of the maturity date as the \emph{maximal} order that long-term rates at earlier dates can dominate long-term rates at later dates. The viability assumption imposed on the market model is weaker than those appearing previously in the literature.
\end{abstract}

\maketitle

\setcounter{section}{0}

\section{Introduction}

\subsection{Background and discussion of the results}

In interest-rate modeling, it is a well-known result that \emph{if the market is arbitrage-free, then long-maturity  yields, as well as forward rates, can never fall}. The last statement is commonly referred to as the \textsl{Dybvig-Ingersoll-Ross} (DIR) theorem, acknowledging the fact that its first occurrence was in \cite{DIR} and opened this research direction. Since then, there has been substantial interest in the literature regarding this result: \cite{RePEc:osu:osuewp:00-12} contained some clarifications on the original proof. Later, \cite{MR1926241} presented an elegant mathematical proof in a quite general context.
Recently, \cite{GolSch08} discussed further interesting generalizations, as well as an asymptotic minimality property, also appearing in \cite{schulze}.

In order to get a better feeling for what the DIR theorem states, let $P_t^T$ denote the price at time $t \in \Real_+$ of a zero-coupon bond with maturity $T > t$; then,
\begin{equation} \label{eq: yield}
R_t^T = - \frac{\log(P_t^T)}{T - t}
\end{equation}
is the prevailing yield from time $t$ to maturity $T$. By ``long-maturity yield at time $t$'', one usually means the \emph{limit} of $R_t^T$ as $T \to \infty$, which, provided it can be defined in some sense, we denote by $R^\infty_t$. The DIR theorem states that, under the assumption of absence of arbitrages in the market, $R^\infty_s \leq R^\infty_t$ holds whenever $s \leq t$. A completely similar statement is valid for forward rates; to refrain ourselves from being repetitive, we shall focus on yields for the purposes of the introductory discussion here.

Originally, the DIR theorem is stated for term-structure models of interest rates. We choose here to take the more comprehensive viewpoint of the term-structure of a market for exchange over time of some underlying asset, which could be a currency, a commodity with investment value, or a similar security. Within this framework, $P_t^T$ represents the units of the underlying asset required by the market at time $t \in \Real_+$ in return of one unit of the underlying asset at time $T > t$. In other words, $P_t^T$ denotes the price, in units of the asset, of a derivative contract that allows transferring the asset through time; as such, it is therefore deeply linked to the term structure of yields and forward rates.

\smallskip

Having clarified the background and statement of the DIR theorem in this general context, two natural questions come to mind:
\begin{enumerate}
  \item What can we salvage if $R^\infty_t$ \emph{cannot} be defined for some $t \in \Real_+$, i.e., if limits of yields as the maturity tends to infinity do \emph{not} exist?
  \item For long-term, but \emph{finite} maturities $T$, the relation $R_s^T \leq R_t^T$, for $s \leq t$, \emph{might} fail to hold. How large can the discrepancy $R_s^T - R_t^T$ be?
\end{enumerate}

An approach to answering the first question is undertaken in \cite{GolSch08}. There, an appropriate \emph{superior limit} definition is utilized in order to compensate for the possible nonexistence of the actual limit. In fact, the authors give a reasonable economic justification for considering the aforementioned superior limit. The approach we take here is to consider the difference $R_s^T - R_t^T$ for $s \leq t$ as $T \to \infty$, and examine when its superior limit (in probability) exists and is nonnegative. Though the previous two approaches are similar in nature, focusing on the \emph{difference} of the rates allows for more detailed comparisons. An example of such instance would be the case where long-term rates explode in the limit.

To the best of our knowledge, an attempt to answer the second question posed above has not appeared in the literature. We show here that the highest possible order that $R_s^T$ can be larger than $R_t^T$ is $1 / T$, i.e., the reciprocal of the long-term maturity. In fact, we shall show by example that this order is the best possible that can be achieved.

\smallskip

As mentioned earlier, and as easy counterexamples show, the DIR theorem is valid only under an assumption regarding nonexistence of some sort of arbitrages in the market. In the literature, there had been mainly two approaches in formalizing such an assumption:
\begin{itemize}
  \item In the first approach, authors stipulate a ``no limiting arbitrage'' condition in the market, reminiscent of the ``No Free Lunch with Vanishing Risk'' condition introduced in \cite{MR1304434}. This was for example the approach initially taken in \cite{DIR}, as well as in \cite{RePEc:osu:osuewp:00-12} shortly after. More recently, \cite{schulze} also takes the same path.
   \item The second approach is to assume the existence of a locally equivalent martingale measure (EMM) in the market. Bond prices are defined as expectations under the EMM of contingent claims giving unit payoff at maturity, discounted by the savings account. This viewpoint on the statement of the DIR theorem was initiated in \cite{MR1926241}.
\end{itemize}
The Fundamental Theorem of Asset Pricing, established in \cite{MR1304434} for the case of equity markets, indicates that the previous assumptions are very closely connected. However, the fact that a continuum of assets is available to trade in bond markets forces different tools to be employed under the two approaches above. This is true even in papers who treat both cases, like \cite{GolSch08}.

Here, we take a path that unifies the above two approaches, at the same time weakening market viability assumptions that have previously appeared. This is done by assuming existence of \textsl{strictly positive supermartingale deflators} in the market, an assumption weaker than the existence of an EMM, and equivalent to absence of arbitrages of the first kind in the bond market where only long positions are allowed, as is discussed in \cite{Kar08}.

\bigskip

After a few probabilistic definitions and later needed results will conclude this section, the structure of the remaining paper is as follows: In Section \ref{sec: results} all the results are presented, while Section \ref{sec: exas} contains examples that illustrate our main findings.

\subsection{Probabilistic definitions and notation}

Let $(\Omega, \, \F, \, \prob)$ be a probability space where all the random elements appearing below will be based.

For $A \in \F$ and $B \in \F$, we write $A \subseteqp B$  if and only if $\prob \bra{(\Omega \setminus B) \cap A} = 0$ --- in other words, $A \subseteqp B$ means that $A$ is contained in $B$ modulo $\prob$. Also, $A \eqp B$ means $A$ and $B$ are equal modulo $\prob$, i.e.,  that both $A \subseteqp B$ and $B \subseteqp A$ hold.

For a collection $(\xi^T)_{T \in \Real_+}$ of random variables, $\plimsupT \xi^T$ is defined to be the essential infimum of all random variables $\zeta$ such that $\lim_{T \to \infty} \prob[\xi^T \leq \zeta] = 1$. Observe that $\plimsupT \xi^T$ is an \emph{extended-valued} random variable, i.e., it can potentially take infinite values, both positive and negative. We also define $\pliminfT \xi^T \dfn - \plimsupT (- \xi^T)$. The limit in probability of $(\xi^T)_{T \in \Real_+}$ as $T \to \infty$ exists if and only if $\pliminfT \xi^T = \plimsupT \xi^T$; in this case, this limit is denoted by $\plimT \xi^T$. (For these definitions and more discussion, we refer the reader to Chapter I of \cite{MR1219534}.)

Let again $(\xi^T)_{T \in \Real_+}$ be a collection of random variables. Whenever
\[
\lim_{\ell \to \infty} \pare{\limsup_{T \to \infty} \prob \bra{\xi^T > \ell}} = 0
\]
holds, we shall be writing $\xi^T = \Opp (1)$ as $T \to \infty$. Also, if $(\alpha^T)_{T \in \Real_+}$ is a sequence of strictly positive real numbers and $A \in \F$, we write $\xi^T = \Opp (\alpha^T)$ on $A$ as $T \to \infty$ if and only if $\indic_A \xi^T / \alpha^T = \Opp (1)$ as $T \to \infty$, where $\indic_A$ denotes the \textsl{indicator} function of the event $A$. Furthermore, we write $\xi^T = \Opn (\alpha^T)$ on $A$ as $T \to \infty$ if and only if $- \xi^T = \Opp (\alpha^T)$ on $A$ as $T \to \infty$. Finally, $\xi^T =  \Op (\alpha^T)$ on $A$ as $T \to \infty$ means $|\xi^T| = \Opp (\alpha^T)$ on $A$ as $T \to \infty$. If the set $A \in \F$ is not explicitly mentioned, it will be tacitly assumed that $A = \Omega$.

As the reader might have already guessed, we are using throughout the upwards-pointing arrow ``$\uparrow$'' as a mnemonic device when dealing with boundedness from above; similarly, the downwards-pointing arrow ``$\downarrow$'' will be used in cases where boundedness from below is involved.

\smallskip

We close this introductory discussion with two general results, which will be used in the text.
\begin{prop} \label{prop: plimsup stuff}
In the statements below, $(\xi^T)_{T \in \Real_+}$ and $(\zeta^T)_{T \in \Real_+}$ are collections of random variables and $(\alpha^T)_{T \in \Real_+}$ is a collection of strictly positive real numbers.
\begin{enumerate}
  \item $\plimsupT \xi^T < \infty$ implies that $\xi^T = \Opp (1)$ as $T \to \infty$.
  \item If $\xi^T = \Opp (\alpha^T)$ as $T \to \infty$ and $\lim_{T \to \infty} \alpha^T = 0$, then $\plimsupT \xi^T \leq 0$.
  \item If $\plimsupT(\xi^T - \zeta^T) \leq 0$, then $\plimsupT \xi^T \, \leq \, \plimsupT\zeta^T$.
\end{enumerate}
\end{prop}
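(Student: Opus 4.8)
The plan is to reduce all three parts to manipulations of a single object: for a collection $(\xi^T)_{T \in \Real_+}$, let $\mathcal{Z}(\xi)$ denote the family of random variables $\zeta$ with $\limT \prob[\xi^T \le \zeta] = 1$, so that by definition $\plimsupT \xi^T = \mathrm{ess\,inf}\,\mathcal{Z}(\xi)$. The two structural facts I would record at the outset are that $\mathcal{Z}(\xi)$ is upward closed (any random variable dominating a member is a member) and directed downward (if $\zeta_1, \zeta_2 \in \mathcal{Z}(\xi)$ then so is $\zeta_1 \wedge \zeta_2$, since $\prob[\xi^T > \zeta_1 \wedge \zeta_2] \le \prob[\xi^T > \zeta_1] + \prob[\xi^T > \zeta_2] \to 0$). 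Directedness is what guarantees a countable decreasing sequence $(\zeta_n) \subseteq \mathcal{Z}(\xi)$ with $\zeta_n \downarrow \plimsupT \xi^T$ almost surely, and this sequence, together with continuity from above of $\prob$, is the workhorse of the whole proof.

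Before treating the three items I would isolate the following characterization, call it $(\star)$: one has $\plimsupT \xi^T \le 0$ almost surely if and only if $\limT \prob[\xi^T > \eps] = 0$ for every $\eps > 0$. The ``if'' direction is immediate, since then each constant $\eps$ lies in $\mathcal{Z}(\xi)$, forcing the essential infimum below $\eps$ for all $\eps$. For the ``only if'' direction I would take the decreasing sequence $\zeta_n \downarrow \plimsupT \xi^T =: \zeta_\infty \le 0$ and bound $\prob[\xi^T \le \eps] \ge \prob[\xi^T \le \zeta_n] - \prob[\zeta_n > \eps]$; letting $T \to \infty$ and then $n \to \infty$, continuity from above gives $\prob[\zeta_n > \eps] \to \prob[\zeta_\infty \ge \eps] = 0$, which yields the claim.

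Item (1) is then a direct application of the decreasing sequence. Writing $\{\xi^T > \ell\} \subseteq \{\zeta_n \ge \ell\} \cup \{\xi^T > \zeta_n\}$ and using $\limT \prob[\xi^T > \zeta_n] = 0$, I would obtain $\limsupT \prob[\xi^T > \ell] \le \prob[\zeta_n \ge \ell]$ for every $n$; sending $n \to \infty$ (continuity from above, with $\bigcap_n \{\zeta_n \ge \ell\} = \{\zeta_\infty \ge \ell\}$) and then $\ell \to \infty$ collapses the right-hand side to $\prob[\zeta_\infty = \infty] = 0$, which is exactly $\xi^T = \Opp(1)$. Item (2) follows from $(\star)$: for fixed $\eps > 0$ and any level $\ell$, the hypothesis $\alpha^T \to 0$ makes $\eps / \alpha^T > \ell$ eventually, so $\prob[\xi^T > \eps] = \prob[\xi^T/\alpha^T > \eps/\alpha^T] \le \prob[\xi^T/\alpha^T > \ell]$ for large $T$; taking $\limsupT$ and then $\ell \to \infty$ and invoking $\xi^T/\alpha^T = \Opp(1)$ gives $\limT \prob[\xi^T > \eps] = 0$, hence $\plimsupT \xi^T \le 0$.

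For item (3) I would avoid adding the two $\plimsup$'s directly, since either may be infinite and produce an $\infty - \infty$. Instead, fix any $\eta \in \mathcal{Z}(\zeta)$ and any $\eps > 0$; by $(\star)$ applied to $(\xi^T - \zeta^T)$ the constant $\eps$ lies in $\mathcal{Z}(\xi - \zeta)$, and the inclusion $\{\zeta^T \le \eta\} \cap \{\xi^T - \zeta^T \le \eps\} \subseteq \{\xi^T \le \eta + \eps\}$ together with a union bound shows $\eta + \eps \in \mathcal{Z}(\xi)$, whence $\plimsupT \xi^T \le \eta + \eps$. Letting $\eps \downarrow 0$ gives $\plimsupT \xi^T \le \eta$ for every $\eta \in \mathcal{Z}(\zeta)$, and taking $\eta$ along a decreasing sequence attaining $\plimsupT \zeta^T$ finishes the proof. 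The main obstacle throughout is bookkeeping with the extended-real, essential-infimum definition: one must consistently replace the family $\mathcal{Z}$ by a countable decreasing sequence and apply continuity from above at the right places, and in (3) route around the $\infty - \infty$ ambiguity by working with members of the families rather than with the $\plimsup$'s themselves.
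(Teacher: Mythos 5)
Your proof is correct and follows essentially the same route as the paper's: the same union-bound inclusions and $\eps$-arguments drive all three parts, with part (2) in particular being identical. The one genuine difference is that you make explicit the downward directedness of the family $\mathcal{Z}$ and the countable decreasing sequence attaining the essential infimum; the paper glosses over exactly this point when it asserts ``from the definition'' that $\limT \prob \bra{\xi^T > \overline{\xi} + 1} = 0$ in part (1), and again when it uses $\limsupT \prob \bra{\xi^T - \zeta^T > \eps} = 0$ in part (3), so your write-up is the more careful rendering of the same argument.
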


\begin{proof}
\noindent (1) Let $\overline{\xi} = \plimsupT \xi^T$. Fix $\ell \in \Real_+$. The set-inclusion $\set{\overline{\xi} \leq \ell - 1} \bigcap \set{\xi^T \leq \overline{\xi} + 1} \subseteq  \set{\xi^T \leq \ell}$, valid for all $T \in \Real_+$, gives
\begin{equation} \label{eq: help for limsup stuff}
\prob \bra{\xi^T > \ell} \leq \prob \bra{\overline{\xi} > \ell - 1} + \prob \bra{\xi^T > \overline{\xi} + 1}.
\end{equation}
As $\prob \bra{\overline{\xi} < \infty} = 1$, we get $\limT \prob \bra{\xi^T > \overline{\xi} + 1} = 0$ from the definition of $\plimsupT \xi^T$. Therefore, \eqref{eq: help for limsup stuff} gives $\limsupT \prob \bra{\xi^T > \ell} \leq \prob \bra{\overline{\xi} > \ell - 1}$. Using again $\prob \bra{\overline{\xi} < \infty} = 1$ we get $\lim_{\ell \to \infty} \prob \bra{\overline{\xi} > \ell - 1} = 0$; therefore, $\lim_{\ell \to \infty} \pare{\limsupT \prob \bra{\xi^T > \ell}} = 0$, which is what we needed to prove.

\smallskip
\noindent (2) Let $\epsilon > 0$. Then,
\[
\limsup_{T \to \infty} \prob[\xi^T > \epsilon] = \limsup_{T \to \infty} \prob[\xi^T / \alpha^T > \epsilon / \alpha^T] \leq \limsup_{T \to \infty} \prob[\xi^T / \alpha^T > \ell]
\]
holds for all $\ell > 0$ in view of $\lim_{T \to \infty} \alpha^T = 0$. Taking limits as $\ell \to \infty$ in the extreme sides of the previous inequality we obtain $\limsup_{T \to \infty} \prob[\xi^T > \epsilon] = 0$, which means that $\plimsupT \xi^T \leq \epsilon$. As this holds for all $\epsilon > 0$, we get $\plimsupT \xi^T \leq 0$.

\smallskip

\noindent (3) Take any random variable $\eta$ such that $\limT [\zeta^T \leq \eta] = 1$. For any $\epsilon > 0$, we have
\[
\limsupT \prob[\xi^T > \epsilon + \eta] \leq \limsupT \prob[\zeta^T > \eta ] + \limsupT \prob[\xi^T - \zeta^T > \epsilon] = 0.
\]
This implies that $\plimsupT \xi^T \leq \epsilon + \plimsupT \zeta^T$ for all $\epsilon > 0$. Letting now $\epsilon$ tend to zero, we get the result.
\end{proof}

\begin{prop} \label{prop: yields bdd}
Let $(\xi^T)_{T \in \Real_+}$ be a collection of random variables. Then, the following statements are true:
\begin{enumerate}
     \item There exists $\bbe \in \F$ such that: $\xi^T =  \Opn(1)$ on $A \in \F$ as $T \to \infty$ \emph{if and only if} $A \subseteqp \bbe$.
     \item There exists $\bab \in \F$ such that: $\xi^T =  \Opp(1)$ on $A \in \F$ as $T \to \infty$ \emph{if and only if} $A \subseteqp \bab$.
     \item There exists $\bou \in \F$ such that: $\xi^T =  \Op(1)$ on $A \in \F$ as $T \to \infty$ \emph{if and only if} $A \subseteqp \bou$.
\end{enumerate}
Furthermore, the sets $\bbe$, $\bab$ and $\bou$ are unique modulo $\prob$.
\end{prop}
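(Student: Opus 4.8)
The plan is to realize each of $\bbe$, $\bab$, and $\bou$ as the \emph{essential supremum} of the family of events on which the corresponding boundedness property holds, via the standard exhaustion argument. I shall carry out the construction in detail for $\bab$ in part (2); parts (1) and (3) then follow by symmetry and intersection.

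First I would unravel the definitions: $\xi^T = \Opp(1)$ on $A$ means precisely that $\lim_{\ell \to \infty} \pare{ \limsupT \prob[\{\xi^T > \ell\} \cap A] } = 0$. Let $\mathcal{A}$ denote the collection of all $A \in \F$ for which this holds. The two structural facts I need are: (i) $\mathcal{A}$ is closed under passage to $\prob$-subsets, that is, $A' \subseteqp A \in \mathcal{A}$ implies $A' \in \mathcal{A}$, which is immediate from monotonicity of $A \mapsto \prob[\{\xi^T > \ell\} \cap A]$; and (ii) $\mathcal{A}$ is closed under countable unions. For (ii), closure under finite unions follows from subadditivity (a sum of two terms each tending to zero tends to zero), while for an increasing sequence $A_n \uparrow A$ with each $A_n \in \mathcal{A}$ the bound $\prob[\{\xi^T > \ell\} \cap A] \le \prob[\{\xi^T > \ell\} \cap A_n] + \prob[A \setminus A_n]$ gives, after taking $\limsupT$ and then letting $\ell \to \infty$, that $\lim_{\ell \to \infty} \limsupT \prob[\{\xi^T > \ell\} \cap A] \le \prob[A \setminus A_n]$ for every $n$; since $\prob[A \setminus A_n] \to 0$, we conclude $A \in \mathcal{A}$. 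A general countable union is reduced to this case by passing to partial unions.

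With (i) and (ii) in hand, I would set $s \dfn \sup \set{ \prob[A] \such A \in \mathcal{A} }$, choose $A_n \in \mathcal{A}$ with $\prob[A_n] \to s$, and define $\bab \dfn \bigcup_n A_n$. By (ii), $\bab \in \mathcal{A}$, so $\xi^T = \Opp(1)$ on $\bab$ and $\prob[\bab] = s$. The ``if'' direction is then immediate from (i): any $A \subseteqp \bab$ inherits the property. For the ``only if'' direction, suppose $\xi^T = \Opp(1)$ on $A$; then $A \cup \bab \in \mathcal{A}$ by finite-union closure, so $\prob[A \cup \bab] \le s = \prob[\bab]$, which forces $\prob[A \setminus \bab] = 0$, i.e. $A \subseteqp \bab$. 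This establishes (2). For (1), I apply (2) to the collection $(-\xi^T)_{T \in \Real_+}$ and let $\bbe$ be the resulting set, since $\xi^T = \Opn(1)$ on $A$ means exactly $-\xi^T = \Opp(1)$ on $A$. For (3), the identity $\{|\xi^T| > \ell\} = \{\xi^T > \ell\} \cup \{\xi^T < -\ell\}$ (a disjoint union for $\ell > 0$) shows that $\xi^T = \Op(1)$ on $A$ holds if and only if both $\xi^T = \Opp(1)$ on $A$ and $\xi^T = \Opn(1)$ on $A$ hold, i.e. if and only if $A \subseteqp \bab$ and $A \subseteqp \bbe$; hence $\bou \dfn \bab \cap \bbe$ does the job.

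Uniqueness modulo $\prob$ is automatic from the characterization: if $\Phi$ and $\Phi'$ both satisfy the stated equivalence, then taking $A = \Phi'$ yields $\Phi' \subseteqp \Phi$ and symmetrically $\Phi \subseteqp \Phi'$, whence $\Phi \eqp \Phi'$. The one step deserving care is the interchange of limits in (ii) --- that the $\Opp(1)$ property genuinely survives an increasing union --- which is where the argument could go wrong if the order of $\lim_{\ell \to \infty}$ and $\limsupT$ were mishandled; everything else is the routine essential-supremum exhaustion.
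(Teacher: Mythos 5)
Your proof is correct, and at its core it uses the same ingredients as the paper: the class of events on which the boundedness property holds, its closure under countable unions, and a set of maximal probability inside that class. The difference is in how the maximal set is produced. The paper (proving part (1) and declaring (2), (3) ``entirely similar'') invokes Zorn's lemma: it partially orders the class by $\subseteqp$, shows every chain has an upper bound --- using precisely the sup-of-probabilities exhaustion that you use --- extracts a maximal element, and then uses finite-union closure to upgrade maximality to the ``only if'' direction. You observe, correctly, that the Zorn step is dispensable: since the class is closed under countable unions, applying the exhaustion directly to the whole class (take $A_n$ with $\prob[A_n] \to s$ and union them) already yields an element of maximal probability, and the same finite-union argument gives $\prob[A \setminus \bab] = 0$ for any competitor $A$. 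This is the standard essential-supremum construction, and it is logically leaner than the paper's route. Two further points where your write-up does more than the paper: you actually verify closure under countable unions (the interchange of $\lim_{\ell \to \infty}$ and $\limsup_{T \to \infty}$ via the bound $\prob[\{\xi^T > \ell\} \cap A] \le \prob[\{\xi^T > \ell\} \cap A_n] + \prob[A \setminus A_n]$), which the paper dismisses as ``relatively straightforward''; and you derive (1) from (2) by passing to $-\xi^T$, and (3) by showing $\bou \eqp \bab \cap \bbe$, rather than repeating the argument three times --- the paper only records the identity $\bou_t \eqp \bbe_t \cap \bab_t$ separately in the surrounding text. Both approaches deliver the same statement; yours is slightly more economical and self-contained.
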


\begin{proof}
We only prove statement (1); the proofs of statement (2) and statement (3) are entirely similar.

Consider the class $\G^\downarrow \dfn \big\{ A \in \F \such \xi^T = \Opn(1) \text{ holds on } A \text{ as } T \to \infty \big\} \subseteq \F$. Since  $\emptyset \in \G^\downarrow$, the class $\G^\downarrow$ is nonempty. Furthermore, it is relatively straightforward to see that $\G^\downarrow$ is closed under countable unions. Observe that $\subseteqp$ is a partial ordering on the subsets of $\F$. Let $\Hcal \subseteq \G^\downarrow$ be a totally ordered set for the order $\subseteqp$ and let $p \dfn \sup \set{\prob[A] \such A \in \Hcal}$. For all $n \in \Natural$, pick $A^n \in \Hcal$ such that $\prob[A^n] \geq p - 1/n$. If $A \dfn \bigcup_{n \in \Natural} A^n$, then $A \in \G^\downarrow$ and it is straightforward that $A$ is an upper bound of $\Hcal$. Zorn's lemma then implies the existence of a \emph{maximal} element in $\G^\downarrow$. Since $\G^\downarrow$ is closed with respect to finite unions, we conclude that the previous maximal element is unique, which we call $\bbe$. The uniqueness modulo $\prob$ of such set $\bbe$ follows immediately from statement (1) of the result.
\end{proof}

\section{Results} \label{sec: results}

\subsection{Market model and yields}
On a filtered probability space $\basis$, we consider a collection $\pare{P^T}_{T \in \Real_+}$ of \cadlag \ (right continuous with left-hand limits) stochastic processes indexed by their \textsl{maturity} $T \in \Real_+$. For each $T \in \Real_+$, $P^T$ is defined in the finite time interval $[0,T]$, i.e., $P^T = (P^T_t)_{t \in [0, T]}$. We assume that $\prob[P_t^T > 0] = 1$ holds for all $t \in [0, T]$ and $T \in \Real_+$, as well as $\prob[P_T^T = 1] = 1$. For a concrete interpretation, regard $P_t^T$ as the price at time $t$ of an instrument delivering a unit of account at time $T \geq t$. Observe however that we do not necessarily assume that $P^T \leq 1$, which is true in bond markets. This is done for a number of reasons:
\begin{enumerate}
  \item From a theoretical viewpoint, $P^T \leq 1$ is not needed for the results we shall present.
  \item From a model-building perspective, such assumption would immediately disqualify all Gaussian short-rate models that are widely used in the industry.
  \item On a more practical side, and as mentioned in the Introduction, our results are applicable in diverse situations, such as commodity markets. If the storage costs that apply for the commodity involved, which could be for example oil, are more than the convenience yield it carries, it is certainly possible that $P_t^T > 1$ holds for $t < T$.
\end{enumerate}

For $0 \leq t < T$, the \emph{yield $R_t^T$ from time $t$ to maturity $T$} is defined in \eqref{eq: yield}. Events where long-term yields are essentially bounded will turn out to be crucial in our discussion. In all that follows, for $t \in \Real_+$, we use  $\bbe_{t}$, $\bab_t$ and $\bou_t$ to be the events appearing in the statement of Proposition \ref{prop: yields bdd} corresponding to the case where $\xi^T = R_t^T$ for $T > t$.
It is apparent that $\bbe_t$ is the \emph{maximal} (modulo $\prob$) event such that long term yields at time $t$ are bounded in probability from below. Exactly similar characterizations are true for $\bab_t$ and $\bou_t$ in terms of boundedness in probability from above and two-sided boundedness in probability, respectively. Obviously, $\bou_t \eqp \bbe_t \cap \bab_t$ holds for all $t \in \Real_+$.

\begin{rem}
In bond markets, we have $P^T \leq 1$ for all $T \in \Real_+$, or equivalently that $R^T \geq 0$
for all $T \in \Real_+$. Therefore, $\bbe_t \eqp \Omega$ for all $t \in \Real_+$; in other words, long-term yields trivially are essentially bounded from below at every time $t \in \Real_+$.
\end{rem}

\begin{rem}
It has been empirically observed that yield curves flatten out for very long maturities; a discussion on this appears for example in \cite{Malkiel}. There also exist theoretical justifications of this phenomenon, as is described in \cite{KFG} and \cite{Yao1999327}. To rigorously describe such behavior in a  weak sense, assume that $\plimT R_t^T$ exists and is a $\prob$-a.s. finite random variable for a fixed $t \in \Real_+$. Then, statement (1) of Proposition \ref{prop: plimsup stuff} implies that $\bou_t \eqp \Omega$.
\end{rem}

\begin{rem}
In this paper, we treat continuous-time models --- for this reason, we use the definition \eqref{eq: yield} for yields. We note, however, that all our results still hold in discrete-time (infinite horizon) settings, with the appropriate changes in the definition of yields and forward rates (see, for example, equations (2.1) and (2.3) of \cite{GolSch08}). The details have been extensively discussed in \cite{MR1926241} and \cite{GolSch08}, where we refer the interested reader.
\end{rem}

\subsection{Strictly positive supermartingale deflators} The notion introduced below is central in our discussion.

\begin{defn} \label{defn: supermat defl}
A \textsl{strictly positive supermartingale deflator} in the market is a \cadlag \ process $Y$ with $\inf_{t \in [0, T]} Y_t > 0$, $\prob$-a.s., for all $T \in \Real_+$, such that $(Y_t P_t^T)_{t \in [0, T]}$ is a supermartingale for all $T \in \Real_+$.
\end{defn}

Existence of a strictly positive supermartingale deflator is equivalent to absence of arbitrages of the first kind in the market with acting investors that may only take long positions on the instruments with prices $(P^T)_{T \in \Real_+}$. For such ``abstract'' markets with infinite number of assets, the last fact is explained in detail in \cite{Kar08}.

\begin{rem}
Even if  the processes $(P^T)_{t \in [0, T]}$ for $T \in \Real_+$ are not initially assumed to have \cadlag \ paths, but are only right-continuous in probability, the existence of a strictly positive supermartingale deflator, as in Definition \ref{defn: supermat defl}, coupled with the standard supermartingale modification theorem, implies that there exist \cadlag \ modifications of $(P^T)_{t \in [0, T]}$, $T \in \Real_+$. As every model encountered in practice consists of \cadlag \  price-processes, we plainly enforce this requirement from the outset.
\end{rem}

\smallskip

We shall now discuss the traditional way of constructing markets possessing a strictly positive supermartingale deflator, via the existence of an equivalent martingale measure (EMM). We include this discussion for completeness since we shall be using it in the examples below. It is important to note that markets where a strictly positive supermartingale deflator exists form a wide-encompassing class, substantially larger than the concrete situation described in the example below. A concrete realistic example where an EMM fails to exist, but a strictly positive supermartingale deflator \emph{does} exist, is presented in \S3.2 of \cite{RePEc:uts:rpaper:198}; in this respect, see also \S \ref{exa: determin counterexa} of the present paper.

\begin{exa} \label{exa: EMM construction}
Let $\qprob$ be a probability on $(\Omega, \F)$ such that $\qprob$ is equivalent to $\prob$ on $\F_t$ for all $t \in \Real_+$. Consider also a \cadlag \ nonnegative process  $B$, representing the \emph{savings account}, such that $\prob \bra{\inf_{t \in [0, T]} B_t > 0} = 1$ as well as $\expec^\qprob[1 / B_T] < \infty$, for all $T \in \Real_+$. Define $P^T$ to be the \cadlag \ modification of the process $[0, T] \ni t \mapsto B_t \, \expec^\qprob \bra{1 / B_T \such \F_t}$. For this market, a strictly positive supermartingale deflator exists and is given by
\[
Y \dfn \frac{1}{B} \, \frac{\ud (\qprob|_{\F_\cdot})}{\ud (\prob|_{\F_\cdot})}.
\]
(In fact, one should consider the \cadlag \ version of the process above.) Indeed, it is straightforward to check that $(Y_t P_t^T)_{t \in [0, T]}$ is actually a $\prob$-\emph{martingale} for all $T \in \Real_+$.
\end{exa}

Contrary to the construction in Example \ref{exa: EMM construction} above, we do not explicitly define a savings account here, as it is not needed. At any rate, \emph{given} a market with prices $(P^T)_{T \in \Real_+}$, \emph{if} a savings account $B$ is able to generate the market in the sense of Example \ref{exa: EMM construction}, i.e., if $P_t^T = B_t \, \expec^\qprob \bra{1 / B_T \such \F_t}$ holds for all $t \leq T$ where $\qprob$ is equivalent to $\prob$ on $\F_t$ for all $t \in \Real_+$, then $B$ is essentially unique; see \cite{MR1779587}.

\subsection{Long-term yields}

We are ready to state the main result of the paper, which can be regarded as a ramification of the DIR theorem.

\begin{thm} \label{thm: DIR for yields}
Suppose that a strictly positive supermartingale deflator exists in the market. Let $s \leq t$. Then:
\begin{enumerate}
  \item $\bbe_s \subseteqp \bbe_t$.
  \item $R^T_{s} - R^T_{t} =  \Opp(1 / T)$ holds on $\bbe_t$  as $T \to \infty$.
\end{enumerate}
\end{thm}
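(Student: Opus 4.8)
The plan is to derive both assertions from one consequence of the deflator property, extracted via the conditional Markov inequality. Fix $s \le t$, let $Y$ be a strictly positive supermartingale deflator, and for $T > t$ abbreviate $M^T \dfn Y_t P_t^T$, a strictly positive and integrable random variable. Since $\pare{Y_u P_u^T}_{u \in [0,T]}$ is a supermartingale, $\expec\bra{M^T \such \F_s} \le Y_s P_s^T$. Applying the conditional Markov inequality at the $\F_s$-measurable level $\ell\,\expec\bra{M^T \such \F_s}$ and taking expectations gives $\prob\bra{M^T > \ell\,\expec[M^T \such \F_s]} \le 1/\ell$ for \emph{every} $T > t$ and every $\ell > 0$; likewise $\prob\bra{M^T > \ell\, Y_s P_s^T} \le 1/\ell$. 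As these bounds are uniform in $T$, we conclude $M^T / \expec\bra{M^T \such \F_s} = \Opp(1)$ and $M^T / \pare{Y_s P_s^T} = \Opp(1)$ as $T \to \infty$, and hence, passing to logarithms, both $\Delta^T \dfn \log M^T - \log\expec\bra{M^T \such \F_s}$ and $\log M^T - \log\pare{Y_s P_s^T}$ are $\Opp(1)$. I expect this to be the conceptual core: the conditional Markov inequality is exactly the right tool precisely because it produces an estimate \emph{uniform over the maturity $T$}, which is what the $\Opp$-calculus as $T \to \infty$ demands. Everything afterwards is bookkeeping with the normalizations $T - s$ and $T - t$, together with the elementary facts that a bounded deterministic multiple of an $\Opp(1)$ family is $\Opp(1)$, that $\log Y_s$ and $\log Y_t$ are $\prob$-a.s.\ finite (hence $\Op(1)$), and that domination by an $\Opp(1)$ family on a set preserves $\Opp(1)$ on that set.

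For statement (1), I would rewrite $\log M^T - \log\pare{Y_s P_s^T} = \Opp(1)$ as $\log P_t^T - \log P_s^T = \Opp(1)$, absorbing the finite terms $\log Y_t$ and $\log Y_s$. Dividing by $T - t$ and using $\log P_t^T = -(T-t) R_t^T$ and $\log P_s^T = -(T-s) R_s^T$ turns this into
\[
- R_t^T = \frac{T-s}{T-t}\,\pare{- R_s^T} + \Opp\pare{1/(T-t)}.
\]
On $\bbe_s$ we have $- R_s^T = \Opp(1)$ by the defining property of $\bbe_s$ in Proposition \ref{prop: yields bdd}, the coefficient $(T-s)/(T-t)$ is bounded, and the remainder is $\Opp(1)$; hence $- R_t^T = \Opp(1)$ on $\bbe_s$, i.e.\ $R_t^T = \Opn(1)$ on $\bbe_s$. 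The characterization in Proposition \ref{prop: yields bdd}(1) then yields $\bbe_s \subseteqp \bbe_t$.

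For statement (2), I would start from the supermartingale lower bound $P_s^T \ge \expec\bra{M^T \such \F_s} / Y_s$, which gives $R_s^T \le \pare{- \log\expec[M^T \such \F_s] + \log Y_s}/(T-s)$. Writing $- \log\expec\bra{M^T \such \F_s} = - \log M^T + \Delta^T$ and then $- \log M^T = - \log Y_t + (T-t) R_t^T$, a short rearrangement produces
\[
T\,\pare{R_s^T - R_t^T} \le \frac{T(t-s)}{T-s}\,\pare{- R_t^T} + \frac{T}{T-s}\,\pare{\Delta^T + \log Y_s - \log Y_t}.
\]
Every term on the right is $\Opp(1)$ on $\bbe_t$: the factor $T(t-s)/(T-s)$ is bounded while $- R_t^T = \Opp(1)$ on $\bbe_t$ by the defining property of $\bbe_t$; the factor $T/(T-s)$ is bounded while $\Delta^T = \Opp(1)$ from the first paragraph; and $\log Y_s - \log Y_t$ is $\prob$-a.s.\ finite. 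Consequently $T\,\pare{R_s^T - R_t^T} = \Opp(1)$ on $\bbe_t$, which is exactly $R_s^T - R_t^T = \Opp(1/T)$ on $\bbe_t$. Note that $\bbe_t$ enters only through the single term $\pare{- R_t^T}$, which is the precise place where boundedness from below of long-term yields at time $t$ is required; I anticipate that the only delicate point, beyond the Markov step, is keeping track of signs and of which normalization ($T-s$ versus $T-t$) multiplies each quantity.
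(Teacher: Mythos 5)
Your proof is correct and follows essentially the same route as the paper's: the core estimate in both is the Markov/supermartingale bound $\prob\bra{Y_t P_t^T > \ell \, Y_s P_s^T} \leq 1/\ell$, uniform in $T$, combined with two rearrangements of the yield difference --- one normalized by $T-t$ and involving $R_s^T$ (yielding $\bbe_s \subseteqp \bbe_t$), the other normalized by $T-s$ and involving $R_t^T$ (yielding the $\Opp(1/T)$ rate on $\bbe_t$). The paper phrases the Markov step unconditionally for the ratio $L_t^T/L_s^T$ of the deflated prices $L^T = Y P^T$ and works with exact identities rather than your one-sided inequalities through $\expec\bra{Y_t P_t^T \such \F_s}$, but these differences are cosmetic.
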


\begin{proof}
For all $T \in \Real_+$, define $L^T = (L_t^T)_{t \in [0, T]}$ via $L^T \dfn Y P^T$, where $Y$ is a strictly positive supermartingale deflator as in Definition \ref{defn: supermat defl}. Then, $(L_t^T)_{t \in [0, T]}$ is a nonnegative supermartingale and $(T- u) R_u^T = - \log (L^T_u) + \log (Y_u)$ holds whenever $u < T$. Write
\begin{equation} \label{eq: good decompo}
(T - t) \pare{R^T_s - R^T_t} = - (t - s) R_s^T + \log \pare{\frac{L_t^T}{L_s^T}} - \log \pare{\frac{Y_t}{Y_s}}
\end{equation}
Let $\ell > 0$; then, we have
\[
\prob \big[ \log (L_t^T / L_s^T) >  \ell  \big] = \prob \big[ L_t^T / L_s^T > e^{\ell} \big] \leq e^{-\ell},
\]
following from Markov's inequality, since $L^T$ is a nonnegative supermartingale. This implies that $\log (L_t^T / L_s^T) = \Opp(1)$ as $T \to \infty$. Since $\log(Y_t / Y_s)$ is an $\Real$-valued random variable and $- (t-s) R^T_s = \Opp(1)$ holds on $\bbe_s$ as $T \to \infty$, \eqref{eq: good decompo} gives that $(T- t) \pare{R^T_{s} - R^T_{t}} =  \Opp(1)$ on $\bbe_s$ as $T \to \infty$. As this obviously implies that $R^T_{s} - R^T_{t} =  \Opp(1)$ on $\bbe_s$ as $T \to \infty$, we obtain that
\[
T \pare{R^T_{s} - R^T_{t}} =  (T- t) \pare{R^T_{s} - R^T_{t}} + t \pare{R^T_{s} - R^T_{t}} = \Opp(1) \text{ holds on } \bbe_s \text{ as } T \to \infty,
\]
which is the same as saying that $R^T_{s} - R^T_{t} = \Opp(1/T)$ on $\bbe_s$ as $T \to \infty$. This immediately implies that $\bbe_s \subseteqp \bbe_t$.

Up to now we have proved that $R^T_{s} - R^T_{t} = \Opp(1/T)$ on $\bbe_s$ as $T \to \infty$; we would like to extend the last relationship to hold on $\bbe_t$. Provided that we replace \eqref{eq: good decompo} with
\[
(T - s) \pare{R^T_s - R^T_t} =  - (t - s) R_t^T + \log \pare{\frac{L_t^T}{L_s^T}} - \log \pare{\frac{Y_t}{Y_s}},
\]
one can follow essentially the same steps as above to finish the proof.
\end{proof}

\subsection{The DIR theorem revisited}

Let $s \leq t$. Theorem \ref{thm: DIR for yields} coupled with statement (2) of Proposition \ref{prop: plimsup stuff} immediately gives that $\plimsupT \pare{R^T_{s} - R^T_{t}} \leq 0$ holds on $\bbe_t$. In particular, and using statement (3) of Proposition \ref{prop: plimsup stuff}, we obtain that
\begin{equation} \label{eq: DIR classical}
\plimsupT R^T_{s} \, \leq \, \plimsupT R^T_{t} \ \text{ holds on } \bbe_t.
\end{equation}

The last equation \eqref{eq: DIR classical} should be compared with the result obtained in \cite{GolSch08}. Of course, in \cite{GolSch08} the superior limit is taken in a stronger sense and the assumption that we are working on $\bbe_t$ is not present. It is indeed true that \eqref{eq: DIR classical} can be still valid outside of $\bbe_t$, \emph{even though} $\plimsupT \pare{R^T_{s} - R^T_{t}} > 0$. Such a situation is described in \S \ref{subsubsec: DIR classical gives no info}; there, both sides of \eqref{eq: DIR classical} are equal to infinity, and are, therefore, equal in a trivial sense. Theorem \ref{thm: DIR for yields} refines the asymptotic relationship \eqref{eq: DIR classical} by precisely examining the behavior of the \emph{relative} differences of long-term yields through different points in time.
%
%
%

\subsection{Forward rates}

The next aim is to obtain an equivalent of Theorem \ref{thm: DIR for yields} for forward rates, which we now introduce. For $0 < t < t' \leq T$, the \textsl{forward rate, set at time $t$ for investment from time $t'$ up to maturity $T$}, is defined via
\begin{equation} \label{eq: forward rate}
F^T_{t, t'} \dfn  \frac{1}{T - t'} \, \log \pare{\frac{P^{t'}_{t}}{P^T_t}} \, = \, \frac{T - t}{T - t'} R^T_t - \frac{t' - t}{T - t'} R^{t'}_t.
\end{equation}

\smallskip

Roughly speaking, the next result we shall present states that yields are essentially bounded exactly on the set where forward rates and yields are asymptotically, as $T \to \infty$, equivalent of order $1 / T$. Similar statements hold for boundedness from below and above. Observe that there is \emph{no} market viability assumption in the statement of Proposition \ref{prop: forwards equiv to yields}.

\begin{prop} \label{prop: forwards equiv to yields}
Let $t \in \Real_+$ and $A \in \F$. The following conditions are equivalent:
\begin{enumerate}
  \item $R^T_{t} =  \Opn(1)$ on $A$ as $T \to \infty$.
  \item For \emph{all} $t' > t$, $F_{t, t'}^T - R_t^T = \Opn(1 / T)$ holds on $A$ as $T \to \infty$.
  \item For \emph{some} $t' > t$, $F_{t, t'}^T - R_t^T = \Opn(1 / T)$ holds on $A$ as $T \to \infty$.
\end{enumerate}
The same equivalences hold if we replace ``$\Opn$'' with ``$\Opp$'' in all conditions (1), (2) and (3), and similarly if we replace ``$\Opn$'' with ``$\Op$'' in all conditions (1), (2) and (3).
\end{prop}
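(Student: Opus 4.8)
The plan is to reduce all three conditions to a single algebraic identity relating the forward rate to the two yields $R^T_t$ and $R^{t'}_t$ that enter its definition. Starting from \eqref{eq: forward rate} and subtracting $R^T_t$, the coefficients recombine neatly: since $(T-t)/(T-t') - 1 \eq (t'-t)/(T-t')$, one obtains
\[
F^T_{t, t'} - R^T_t \eq \frac{t' - t}{T - t'} \pare{R^T_t - R^{t'}_t}.
\]
Multiplying through by $T$ gives $T \pare{F^T_{t,t'} - R^T_t} \eq c^T \pare{R^T_t - R^{t'}_t}$, where $c^T \dfn (t'-t) T / (T - t')$ is a \emph{deterministic} sequence with $\lim_{T \to \infty} c^T \eq t' - t > 0$; in particular, for all large $T$, $c^T$ is bounded away from both $0$ and $\infty$.

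The key observation is that $R^{t'}_t$ is a \emph{fixed}, $\Real$-valued random variable (it does not depend on $T$), so $R^{t'}_t = \Op(1)$ as $T \to \infty$, and the same is true for $\indic_A R^{t'}_t$; hence the term $c^T R^{t'}_t$ and its restriction to $A$ lie in each of the three order classes $\Opn(1)$, $\Opp(1)$ and $\Op(1)$, since $\Op(1)$ is contained in all three. I would record, as routine consequences of the definitions, that each of these three classes is stable under addition and under multiplication by a deterministic sequence bounded away from $0$ and $\infty$ (for instance, stability of $\Opp$ under addition follows from $\prob[\xi^T + \zeta^T > 2\ell] \leq \prob[\xi^T > \ell] + \prob[\zeta^T > \ell]$, and stability under scaling by $c^T \in [a,b]$ with $0 < a \le b < \infty$ follows from the inclusion $\set{c^T \xi^T > \ell} \subseteq \set{\xi^T > \ell / b}$; the downward and two-sided versions are analogous, and all of this is compatible with multiplication by $\indic_A$).

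With these preliminaries the equivalences follow directly. For $(1) \Rightarrow (2)$: if $R^T_t = \Opn(1)$ on $A$, then for every $t' > t$ the quantity $c^T \pare{R^T_t - R^{t'}_t}$ is $\Opn(1)$ on $A$, being a bounded-coefficient multiple of the sum of $R^T_t \in \Opn(1)$ (on $A$) and $- R^{t'}_t \in \Op(1) \subseteq \Opn(1)$ (on $A$); thus $T \pare{F^T_{t,t'} - R^T_t} = \Opn(1)$ on $A$, i.e. $F^T_{t,t'} - R^T_t = \Opn(1/T)$ on $A$. The implication $(2) \Rightarrow (3)$ is immediate. For $(3) \Rightarrow (1)$: from $F^T_{t,t'} - R^T_t = \Opn(1/T)$ on $A$ for some $t' > t$ one gets $c^T \pare{R^T_t - R^{t'}_t} = \Opn(1)$ on $A$; dividing by the positive, bounded-away-from-zero factor $c^T$ yields $R^T_t - R^{t'}_t = \Opn(1)$ on $A$, and adding back $R^{t'}_t \in \Op(1) \subseteq \Opn(1)$ gives $R^T_t = \Opn(1)$ on $A$. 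The $\Opp$ and $\Op$ versions are obtained by replacing $\Opn$ with $\Opp$, respectively $\Op$, verbatim, since the only ingredients used — positivity and boundedness of the deterministic $c^T$, the membership $R^{t'}_t \in \Op(1)$, and closure of the order class under addition — hold identically for all three classes.

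The computations are elementary, so there is no genuine obstacle; the one point demanding a little care is the bookkeeping of the order-class arithmetic, namely closure under addition and under scaling by the bounded deterministic factor $c^T$, together with the compatibility of these operations with the localization onto $A$ through the indicator $\indic_A$. Because $c^T$ is deterministic and $R^{t'}_t$ is merely a finite random variable that gets absorbed into the order term, none of these steps invokes any market-viability hypothesis, consistent with the remark preceding the statement that no such assumption is needed here.
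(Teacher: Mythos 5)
Your proof is correct and takes essentially the same route as the paper's: both rest on the identity $(T-t')\pare{F^T_{t,t'} - R^T_t} = (t'-t)\pare{R^T_t - R^{t'}_t}$ coming from \eqref{eq: forward rate}, the observation that $R^{t'}_t$ is a fixed $\Real$-valued random variable, and elementary stability of the order classes under addition and bounded positive deterministic scaling. Your packaging --- absorbing the ratio $T/(T-t')$ into a single bounded coefficient $c^T$ --- is only a mild streamlining of the paper's two-step passage between the $(T-t')$-scaling and the $T$-scaling, not a genuinely different argument.
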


\begin{proof}
We shall only prove the equivalence of (1), (2) and (3) as explicitly stated in Proposition \ref{prop: forwards equiv to yields}. The cases where we replace ``$\Opp$'' with ``$\Opn$'' or ``$\Op$'' in all conditions (1), (2) and (3) is entirely similar.
In what follows, $t \in \Real_+$ and $A \in \F$ are fixed.

Start by assuming (1) and fix $t' > t$. First of all, observe that $F_{t, t'}^T = \Opn(1)$ on $A$ as $T \to \infty$, as follows from the fact that $R_{t}^T = \Opn(1)$ on $A$ as $T \to \infty$ and the definition of the forward rates at \eqref{eq: forward rate}. Now, $(T - t') \big( F_{t, t'}^T - R_t^T \big) = (t' - t) \big( R_t^T - R_t^{t'} \big)$ as follows again from \eqref{eq: forward rate}, immediately gives that $(T - t') (F_{t, t'}^T - R_t^T ) = \Opn(1)$ on $A$ as $T \to \infty$, since $R_t^T = \Opn(1)$ on $A$ as $T \to \infty$. Using also the fact that $F_{t, t'}^T - R_t^T  = \Opn(1)$ on $A$  as $T \to \infty$, we get that $T (F_{t, t'}^T - R_t^T) = \Opn(1)$ on $A$ as $T \to \infty$, which is what we needed to show.

Of course, condition (2) implies condition (3).

Now, assume (3). Observe first that
\[
(T - t') \pare{F_{t, t'}^T - R_t^T } = \pare{\frac{T - t'}{T}} \, T  \pare{F_{t, t'}^T - R_t^T } = \Opn(1) \text{ holds on $A$ as } T \to \infty.
\]
Then,
\[
R_t^T =  \pare{\frac{T - t'}{t' - t}} \pare{F_{t, t'}^T - R_t^T }  + R_t^{t'} = \Opn(1) \text{ holds on $A$ as }  T \to \infty,
\]
which is exactly condition (1) and concludes the proof.
\end{proof}

According to Proposition \ref{prop: forwards equiv to yields} and Proposition \ref{prop: yields bdd}, $\bbe_t$ can be regarded as the largest set where $F_{t, t'}^T - R_t^T = \Opn(1 / T)$ holds for some, and then for all, $t' > t$. Similar interpretations are valid for the events $\bab_t$ and $\bou_t$, where $t \in \Real_+$.

\smallskip

We are now ready to state the version of Theorem \ref{thm: DIR for yields} for forward rates. The situation is only slightly more complicated, since we have to control the boundedness of yields from both sides at different points in time.

\begin{thm} \label{thm: DIR for forward}
Suppose that a strictly positive supermartingale deflator exists in the market. Let $s \leq t$, as well as $s < s'$ and $t < t'$. Then, $F_{s, s'}^T - F_{t, t'}^T = \Opp(1 / T)$ holds on $\bab_s \cap \bbe_t$ as $T \to \infty$.
\end{thm}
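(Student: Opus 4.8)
The plan is to reduce the statement about forward rates to the already-established results about yields---namely Theorem \ref{thm: DIR for yields} and the one-sided versions of Proposition \ref{prop: forwards equiv to yields}---by means of a telescoping decomposition. I would start from the algebraic identity
\[
F_{s, s'}^T - F_{t, t'}^T = \pare{F_{s, s'}^T - R_s^T} + \pare{R_s^T - R_t^T} - \pare{F_{t, t'}^T - R_t^T},
\]
which holds for all $T$ large enough that $s' \leq T$ and $t' \leq T$. The goal is then to show that each of the three bracketed terms is $\Opp(1 / T)$ on $\bab_s \cap \bbe_t$, after which I would conclude using the fact that a finite sum of $\Opp(1 / T)$ quantities on a fixed event is again $\Opp(1 / T)$ on that event.

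For the first term, on $\bab_s$ we have $R_s^T = \Opp(1)$ as $T \to \infty$ by the very definition of $\bab_s$, so the $\Opp$-version of Proposition \ref{prop: forwards equiv to yields} (applied with the event $\bab_s$ and maturity $s' > s$) gives $F_{s, s'}^T - R_s^T = \Opp(1 / T)$ on $\bab_s$. For the third term, on $\bbe_t$ we have $R_t^T = \Opn(1)$, so the $\Opn$-version of the same proposition (applied with the event $\bbe_t$ and maturity $t' > t$) yields $F_{t, t'}^T - R_t^T = \Opn(1 / T)$ on $\bbe_t$, and hence $- \pare{F_{t, t'}^T - R_t^T} = \Opp(1 / T)$ on $\bbe_t$. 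For the middle term, statement (2) of Theorem \ref{thm: DIR for yields} applies directly, since $s \leq t$, giving $R_s^T - R_t^T = \Opp(1 / T)$ on $\bbe_t$.

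It remains to assemble these estimates on the intersection $\bab_s \cap \bbe_t$. Since $\bab_s \cap \bbe_t \subseteqp \bab_s$ and $\bab_s \cap \bbe_t \subseteqp \bbe_t$, each of the three bounds above---being an $\Opp(1 / T)$ estimate on a set containing $\bab_s \cap \bbe_t$---restricts to an $\Opp(1 / T)$ estimate on $\bab_s \cap \bbe_t$; this is because multiplying a $\Opp(1)$ quantity by a bounded indicator preserves the $\Opp(1)$ property, as $\set{c \eta > \ell} \subseteqp \set{\eta > \ell}$ whenever $0 \leq c \leq 1$ and $\ell > 0$. Summing the three bounds and invoking the elementary inequality $\prob[\eta + \theta > 2 \ell] \leq \prob[\eta > \ell] + \prob[\theta > \ell]$---which shows that the sum of two $\Opp(1)$ families is $\Opp(1)$---delivers $F_{s, s'}^T - F_{t, t'}^T = \Opp(1 / T)$ on $\bab_s \cap \bbe_t$, as required.

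I do not anticipate a genuine analytic obstacle here, since all the heavy lifting is carried by Theorem \ref{thm: DIR for yields} and Proposition \ref{prop: forwards equiv to yields}. The only points requiring care are bookkeeping ones: matching each yield-boundedness event to the correct one-sided ($\Opp$ versus $\Opn$) version of Proposition \ref{prop: forwards equiv to yields}---which is precisely what forces the asymmetric intersection $\bab_s \cap \bbe_t$ (boundedness from above at the earlier time $s$, from below at the later time $t$)---together with the verification that one-sided $\Op$-type estimates are stable both under addition and under restriction to smaller events.
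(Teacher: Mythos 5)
Your proof is correct and takes essentially the same route as the paper's: the identical three-term decomposition, with Proposition \ref{prop: forwards equiv to yields} supplying the two forward-rate-minus-yield estimates (the $\Opp$ version at time $s$ on $\bab_s$, the $\Opn$ version at time $t$ on $\bbe_t$) and statement (2) of Theorem \ref{thm: DIR for yields} supplying the middle term $R_s^T - R_t^T = \Opp(1/T)$ on $\bbe_t$. The only difference is one of detail: the paper's proof is terse and leaves implicit the bookkeeping you spell out, namely that $\Opp(1/T)$ estimates restrict from $\bab_s$ and $\bbe_t$ to the intersection $\bab_s \cap \bbe_t$ and are stable under finite sums.
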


\begin{proof}
Write
\[
F_{s, s'}^T - F_{t, t'}^T = \big( F_{s, s'}^T - R_{s}^T    \big) - \big( F_{t, t'}^T -  R_{t}^T \big) + \big( R_{s}^T -  R_{t}^T \big).
\]
Now, $F_{s, s'}^T - R_{s}^T = \Opp(1 / T)$ and $R_{t} - F_{t, t'}^T = \Opp(1 / T)$ and  both hold on $\bab_s \cap \bbe_t$ as $T \to \infty$ in view of Proposition \ref{prop: forwards equiv to yields}. Furthermore, $R_{s}^T -  R_{t}^T = \Opp(1 / T)$ holds on $\bbe_t$ by Theorem \ref{thm: DIR for yields}. Putting everything
together, we obtain the claim of Theorem \ref{thm: DIR for forward}. \end{proof}

\begin{rem}
Let $s \leq t$. If a strictly positive supermartingale deflator exists in the market, statement (1) of Theorem \ref{thm: DIR for yields} gives $\bou_s \eqp \bab_s \cap \bbe_s \subseteqp \bab_s \cap \bbe_t$. In particular, Theorem \ref{thm: DIR for forward} implies that $F_{s, s'}^T - F_{t, t'}^T = \Opp(1 / T)$ holds on $\bou_s$ as $T \to \infty$, whenever $s < s'$ and $t < t'$, which is a more pleasant statement.
\end{rem}

\section{Remarks and Examples} \label{sec: exas}

We proceed with several remarks and (counter)examples regarding our main results. The most important ones are given in \S \ref{exa: on best rate}, where it is shown that the reciprocal of the maturity is indeed the best order of domination that can be obtained, and \S \ref{exa: determin counterexa}, where we demonstrate that our market viability assumption is strictly weaker than the ones that previously appeared in the literature.

\subsection{Counterexamples on the main results}

\subsubsection{}
The inclusion $\bbe_s \subseteqp \bbe_t$ in Theorem \ref{thm: DIR for yields} might fail when a strictly positive supermartingale deflator does not exist. Consider for example the deterministic market with $P_t^T = 1$ for $0 \leq t < 1$ and $t \leq T$, while $P_t^T = \exp(T^2 - t^2)$ for $1 \leq t \leq T$. Then, $R_0^T = 0$ and $R_1^T = - T - 1$ holds for $T \geq 1$. Therefore,  $\bbe_0 \eqp \Omega \supsetneq_\prob \emptyset \eqp \bbe_1$.

\subsubsection{} \label{subsubsec: DIR classical gives no info}
Even when a strictly positive supermartingale deflator exists, the asymptotic behavior of yield differences mentioned in statement (2) of Theorem \ref{thm: DIR for yields} can fail to hold outside $\bbe_t$. With $\qprob = \prob$ and $B$ defined via $B_t = \exp(-t^2)$ for $t \in \Real_+$, define a market according to Example \ref{exa: EMM construction}. In this case, $\bbe_t \eqp \emptyset$ for all $t \in \Real_+$. Further, $R_t^T = - T - t$ for $t \leq T$, which implies that $R_s^T - R_t^T = t - s > 0$ for $s < t$, and statement (2) of Theorem \ref{thm: DIR for yields} fails to hold. Observe also in this example that the asymptotic relationship $\limsup_{T \to \infty} R^T_{s} = \infty = \limsup_{T \to \infty} R^T_{t}$ trivially holds identically; however, one cannot honestly claim that long-term yields are nonincreasing, as $\lim_{T \to \infty} (R_s^T - R_t^T) = t - s > 0$ whenever  $s < t$.

\subsubsection{} \label{rem: nice strange bond market}

With $\qprob = \prob$ and $B$ defined via $B_t = \exp(t^2)$ for $t \in \Real_+$, define a bond market according to Example \ref{exa: EMM construction}. By construction, there exists a strictly positive supermartingale deflator. Furthermore, $\bbe_t \eqp \Omega$ holds for all $t \in \Real_+$, and we have $R_t^T = T + t$ for $t \leq T$.

In the setting of Theorem \ref{thm: DIR for yields}, this example shows that $\plimT \pare{R_s^T - R_t^T}$ exists and is \emph{strictly} negative on $\bbe_t$ for $s < t$. Indeed, this follows by observing that $R_s^T - R_t^T = -(t - s) < 0$ for $s < t$.

We move on to the setting of Theorem \ref{thm: DIR for forward}. A straightforward use of \eqref{eq: forward rate} gives that, for $0 \leq t < t' < T$, $F_{t, t'}^T = T + t'$. Pick $s \leq t$, $s < s'$, $t < t'$; then, $\limT (F_{s, s'}^T - F_{t, t'}^T) = s' - t'$, which can take \emph{any} value in $\Real$ for appropriate choices of $s'$ and $t'$. Therefore, this example shows that we can have $\plimT \big( F_{s, s'}^T - F_{t, t'}^T  \big) < 0$ on $\bbe_t$, if we are not working on $\bab_s$, which shows the sharpness of the result in Theorem \ref{thm: DIR for forward}.

\subsection{Optimal rate} \label{exa: on best rate}
The rate $\Opp(1 / T)$ obtained in statement (2) of Theorem \ref{thm: DIR for yields} cannot be improved. We shall now present an example where  $\plimT \pare{T (R^T_{s} - R^T_{t})}$ exists for all $s < t$, and is a \emph{nonzero} random variable. We shall use again the construction of Example \ref{exa: EMM construction}.

\smallskip

Consider the filtered probability space $\basis$, and let $\qprob = \prob$. Let also $W$ be a standard one-dimensional Brownian motion on the latter filtered probability space. The filtration $\filtration$ is assumed to be the one generated by $W$. Let $b \in \Real$. Define a \textsl{short-rate} process $r$ starting at some $r_0 \in \Real$, satisfying
\[
r_t = e^{-t} r_0 + (1 - e^{-t}) b - \sqrt{2} e^{-t} \int_0^t W_u e^{u} \ud u + \sqrt{2} W_t
\]
for all $t \in \Real_+$. In differential terms it is easy to see that $\ud r_t = (b - r_t) \ud t + \sqrt{2} \ud W_t$; this is a special case of the Vasicek model for the short rate --- see \cite{citeulike:1069961}. The parameters are chosen to simplify the formula \eqref{eq: yield vasicek} below for the yield. Let $B \dfn \exp \pare{\int_0^\cdot r_t \ud t}$ and define a market according to Example \ref{exa: EMM construction}. In this case it is well-known (see \cite{MR2255741}) that
\begin{equation} \label{eq: yield vasicek}
R_t^T = \frac{1 - e^{-(T - t)}}{T - t} \, r_t + \frac{\pare{1 - e^{-(T - t)}}^2}{2 (T - t)} + (b - 1) \pare{1 - \frac{1 - e^{-(T - t)}}{T - t}}.
\end{equation}
In particular, $\plimT R_t^T = b - 1$ holds for all $t \in \Real_+$, which implies that $\bbe_t \eqp \Omega$ for all $t \in \Real_+$. Using \eqref{eq: yield vasicek} once again we get $\plimT \pare{T R_t^T - T(b-1)} = r_t - b + 3/2$. Therefore, for $s < t$, $\plimT \pare{T (R_s^T - R_t^T) } \, = \, r_s - r_t$, which is a nontrivial Gaussian random variable.

\subsection{Market viability} \label{exa: determin counterexa}

As already discussed, asking for the existence of a strictly positive supermartingale deflator is a market viability condition that is weaker than the ones that have appeared previously in the literature. Here, we shall present an example of a market with \emph{deterministic} bond prices which admits a strictly positive supermartingale deflator, but where more classical viability assumptions fail.

The probability space we are working on is left intentionally unspecified, since it plays absolutely no role. For $0 \leq t \leq T$, define $P_t^T = \min \set{1, \, \exp(1 - (T - t)) }$. Since $R_t^T = 1 - 1 / (T - t)$ holds for $T > t + 1$, we obtain $\lim_{T \to \infty} R_t^T = 1$ for all $t \in \Real_+$. Therefore, $\Phi_t \eqp \Omega$ for all $t \in \Real_+$, and the results of Theorem \ref{thm: DIR for yields} and Theorem \ref{thm: DIR for forward} hold trivially. 

Let $Y$ be defined via $Y_t = \exp(-t)$ for $t \in \Real_+$. Then, $Y_t P_t^T = \min \set{\exp(-t), \, \exp(1 - T) }$ for $0 \leq t \leq T$, which means that $(Y_t P^T_t)_{t \in [0, T]}$ is a nonincreasing process, i.e., a supermartingale. It follows that a strictly positive supermartingale deflator exists in this market. It follows that there cannot exist any arbitrages of the first kind if we only consider long positions in the bonds. This follows from the existence of a strictly positive supermartingale deflator, in view of the general results in \cite{Kar08}. However, we shall shortly see that if we allow for short positions on short-term bonds, arbitrages appear.

Let $t \in \Real_+$. For any $T \geq t+2$, note that
\[
P_t^{T} = \exp(-T + t + 1) < P_{t}^{t+1} P_{t+1}^{T} = \exp(-T + t + 2).
\]
Consequently, there cannot exist a probability $\qprob_{t, t+1}$ such that $P_t^{T} \geq P_t^{t+1} \expec_{\qprob_{t, t+1}} \bra{P_{t+1}^T \such \F_{t+1}}$. Therefore, condition 2.10 of \cite{GolSch08}, which already is a weaker version of existence of an EMM, is not satisfied. Furthermore, consider the following investment strategy at time $t$: take a long position of $\exp (T - t - 1)$ units of a bond maturing at time $T \geq t+1$ and a short position in a single unit of a bond maturing at time $t+1$. The capital required for this position at time $t$ is $- \exp (T - t - 1) P_{t}^T - P_{t}^{t+1} = 1 - 1 = 0$. At time $t + 1$, the value of this position will be
\[
\exp (T - t - 1) P_{t+1}^{T} - P_{t+1}^{t+1} = \exp(1) - 1 > 0. 
\]
Therefore, there exists an arbitrage in the market according to Definition 2.29 from \cite{GolSch08} once we allow for short positions on short-term bonds. Observe that one does not even need the ``limiting'' procedure mentioned in \cite{GolSch08} in the definition of arbitrage.

\bibliographystyle{siam}
\bibliography{DIR}
\end{document}